\def\BibTeX{{\rm B\kern-.05em{\sc i\kern-.025em b}\kern-.08em
    T\kern-.1667em\lower.7ex\hbox{E}\kern-.125emX}}
\newtheorem{theorem}{Theorem}[section]
\newtheorem{lemma}[theorem]{Lemma}
\newtheorem{remark}[theorem]{Remark}
\newtheorem{prop}[theorem]{Proposition}
\newtheorem{defn}[theorem]{Definition}
\newtheorem{example}[theorem]{Example}
\newcommand{\arr}[1]{\mathbf{#1}}
\newcommand{\farray}[1]{\mathcal{#1}}
\newcommand{\arrprod}[2]{\mathbin{{#1}.{#2}}}
\newcommand{\tensor}{\mathbin{\text{\textcircled{$\otimes$}}}}
\newcommand{\colo}{{\colon\!\!}}
\DeclareMathOperator{\dom}{dom}
\DeclareMathOperator{\ran}{ran}
\begin{document}

\title{Fuzzy Relational Databases via Associative Arrays
}

\author{\IEEEauthorblockN{Kevin Min, Hayden Jananthan, Jeremy Kepner}
\IEEEauthorblockA{\textit{Massachusetts Institute of Technology}}}

\IEEEoverridecommandlockouts
\IEEEpubid{\makebox[\columnwidth]{979-8-3503-0965-2/23/\$31.00 \copyright 2023 IEEE \hfill}
\hspace{\columnsep}\makebox[\columnwidth]{ }}

\maketitle

\begin{abstract}
    The increasing rise in artificial intelligence has made the use of imprecise language in computer programs like ChatGPT more prominent. Fuzzy logic addresses this form of imprecise language by introducing the concept of fuzzy sets, where elements belong to the set with a certain membership value (called the fuzzy value). This paper combines fuzzy data with relational algebra to provide the mathematical foundation for a fuzzy database querying language, describing various useful operations in the language of linear algebra and multiset operations, in addition to rigorously proving key identities. 
\end{abstract}

\section{Introduction}\label{Intro}

\let\thefootnote\relax\footnotetext{Research was sponsored by the United States Air Force Research Laboratory and the Department of the Air Force Artificial Intelligence Accelerator and was accomplished under Cooperative Agreement Number FA8750-19-2-1000. The views and conclusions
contained in this document are those of the authors and should not be interpreted as representing the official policies, either expressed or implied, of the Department of the Air Force or the U.S. Government. The U.S. Government is authorized to reproduce and distribute reprints for Government purposes notwithstanding any copyright notation herein.}

Ever since their introduction by Codd \cite{10.1145/362384.362685} in 1970, relational database management systems (RDBMSs) such as the Structured Query Language (SQL) have been used to efficiently perform searches and categorization from a variety of fields, including financial systems, manufacturing, social media, and perhaps most recently, artificial intelligence \cite{10.1145/166635.166656}. With the ever-increasing popularity of this last topic comes a need for a system more flexible than traditional relational databases, which are often limited in the type of data they can handle, involving only exact data points. 

Several papers \cite{366566, SOSNOWSKI199023, MEDINA199487, 4052712, LiteratureOverview} have attempted to rectify this limitation by introducing fuzzy sets to databases, to create fuzzy relational databases. Fuzzy sets, introduced by Zadeh \cite{ZADEH1965338, LEE1969421}, assign a degree of partial membership to each of its elements (``fuzzy value"), providing a more general framework than traditional sets, where membership is binary.
Fuzzy relational databases take advantage of this generalization by treating the database as a fuzzy set of its rows, in contrast to traditional databases which are akin to standard sets consisting of its rows.

These fuzzy databases offers several major advantages over traditional RDBMSs, handling data with vagueness or imprecision regarding what entries should be contained in the database. Most prominently, this features in artificial intelligence, where it is often necessary to account for the imprecise semantics of human language \cite{FuzzyAI, electronics10222878}. For example, if one wished to query for ``tall'' employees at a company using a traditional database, an arbitrary cutoff height would be needed such that ``tall'' applies only to those at or above that height. Of course, this doesn't appropriately represent reality; a discrete cutoff does not fully capture the imprecise nature of the descriptor ``tall".

On the other hand, with fuzzy databases one could make a ``fuzzy query" representing the term ``tall". The resulting database from this fuzzy query would associate high fuzzy membership values with the entries corresponding to tall people, and lower fuzzy membership values with the entries corresponding to shorter people. This would provide a more accurate and natural description of the imprecise query.

In this paper we provide a rigorous mathematical formulation of such a fuzzy relational database system, analogous to what \cite{8258298} describes for standard RDBMSs. We use the concept of associative arrays, a generalization of matrices that allows for arbitrary row and column indices instead of the traditional natural number indexing of matrices. By using associative array equivalents of traditional linear algebra operations alongside fuzzy multiset theory \cite{RIESGO2018107, UnionIntersections, inproceedings}, we present a natural method for representing and implementing a fuzzy RDBMS. In \S \ref{Prelim} we cover the relevant technical background;  in \S \ref{FArrays} we introduce the concept of fuzzy arrays; and in \S \ref{Relations} and \S \ref{Relations2} we define key relational algebra operations generalized to fuzzy logic, and prove useful properties involving them.

\section{Preliminaries and Background}\label{Prelim}

\IEEEpubidadjcol

In this section we discuss the mathematical preliminaries to understanding fuzzy relational databases. We focus on fuzzy multisets and associative arrays, which represent the core components of fuzzy databases.
 
\subsection{Multisets}

At its core, databases can be interpreted as multisets of their rows. To represent fuzzy databases, we discuss multisets of fuzzy values. Using this, we can effectively represent the fuzzy database as a fuzzy multiset of its rows, in which rows belong to the fuzzy database with a fuzzy value.

\begin{defn}[multiset of fuzzy values]
    A \emph{multiset of fuzzy values} $S$ (or a \emph{multiset}) is a function $\mathrm{Count}_S \colon [0, 1] \rightarrow \mathbb{N}_{\geq 0}$. 
    
    We say $S$ and $T$ are \emph{equivalent} (denoted $S \equiv T$) if $\mathrm{Count}_S(x)=\mathrm{Count}_T(x)$ for all $x\in (0, 1]$.
    
    The \emph{cardinality} of $S$ is $|S|\coloneqq \sum_{x\in(0, 1]}\mathrm{Count}_S(x)$. Note that cardinality is preserved up to equivalence.
    
    A multiset is \emph{finite} if the cardinality is finite.
    
    We will also use the notation $S=\{a_1, a_2, \dots, a_n\}$ to denote a finite multiset. Here, $\mathrm{Count}_S(x)$ is defined as the number of occurrences of $x$ among $a_1, a_2, \dots, a_n$. Notably, the order of the $a_i$ values does not matter with this notation.
\end{defn}

Throughout this paper, ``multiset'' means ``finite multiset of fuzzy values'' unless otherwise stated.

\begin{defn}[$k$-th element of multiset]
    Given a multiset $M$, define $M[k]$ as the $k$th largest element of the multiset (including multiplicity). If $k>|M|$, define $M[k]=0$.
\end{defn}

\begin{defn}[multiset operations]
    Given two multisets $S$ and $T$, we define their \emph{disjoint union} $S\uplus T$ by
    \begin{equation*}
        \mathrm{Count}_{S\uplus T}(x)\coloneqq \mathrm{Count}_{S}(x)+\mathrm{Count}_{T}(x),
    \end{equation*} 
    
    Let $n\coloneqq\max(|S|, |T|)$.
    Define their \emph{intersection} $S \cap T$ and \emph{union} $S \cup T$ by
    \begin{align*}
        S \cap T & \coloneqq \{S[1] \wedge T[1], S[2] \wedge T[2], \dots, S[n] \wedge T[n]\}, \\
        S \cup T & \coloneqq \{S[1] \vee T[1], S[2] \vee T[2], \dots, S[n] \vee T[n]\}.
    \end{align*}
    
    We say $S \subseteq T$ if there is some $R$ such that $S \cup R \equiv T$.
\end{defn}

\subsection{Associative Arrays}

In order to perform operations on the entries of a fuzzy database, we introduce associative arrays, a generalization of matrices that allows us to perform generalized linear algebra operations.

\begin{defn}[standard associative array]
    A \emph{standard associative array} (or a \emph{standard array}) is a function $\arr{A} \colon I_\arr{A} \times J_\arr{A} \to \mathbb{V}$, where $I_\arr{A}$ and $J_\arr{A}$ are arbitrary sets known as the \emph{row} and \emph{column supports} of $\arr{A}$, respectively, and $\mathbb{V} = (V, \oplus, \otimes, 0, 1)$ is a semiring. 
\end{defn}

In this paper, we use the standard associative array operations described in \cite{8258298}. These include element-wise products, array multiplication and addition, the identity arrays, transposes, and the array Kronecker products. Note that \cite{8258298} refers to standard associative arrays as simply ``associative arrays''; we include the word ``standard'' to distinguish from fuzzy associative arrays, defined later.

\begin{defn}
    Suppose $f$ is a symmetric function that takes in tuples of nonzero values in $\mathbb{V}$ and outputs a value in $\mathbb{V}$, and $f(\emptyset)=0$. Given standard arrays $\arr{A}$ and $\arr{B}$ such that $J_\arr{A} = I_\arr{B}$, define $\arr{A} \arrprod{f}{\otimes} \arr{B}$ as the standard array such that 
    \begin{align*}
        (\arr{A} \arrprod{f}{\otimes} \arr{B})(k_1, k_2) \coloneqq f\Bigl( \bigl(\arr{A}(k_1, j) \otimes \arr{B}(j, k_2)\bigr)_{j \in J} \Bigr),
    \end{align*} 
    where $J = \{j \in J_\arr{A} \mid \arr{A}(k_1, j_i) \otimes \arr{B}(j_i, k_2) \neq 0\}$.
\end{defn}

\begin{defn}[rows]
    A function $r$ is a \emph{row} of $\arr{A}$ if there is $k_1 \in I_\arr{A}$ such that $r(k_2) = \arr{A}(k_1, k_2)$ for all $k_2 \in J_\arr{A}$. We use the notation $\arr{A}[k_1, \colo]$ to represent $f$.

    Given a row $r$ of $\farray{A}$ and a set of column keys $J$, $r|_J$ is $r$ restricted to the domain $J_\arr{A} \cap J$.
\end{defn}

\begin{defn}[regularization]
    Given a standard array $\arr{A} \colon I_\arr{A} \times J_\arr{A} \rightarrow \mathbb{V}$, the \emph{regularization} of $\arr{A}$ is
    \begin{equation*}
        \omega(\arr{A}) \coloneqq (\mathbb{I}_{I_\arr{A}, \mathbb{V}^{J_\arr{A}}, \arr{A}[-, \colo]})^\intercal \arrprod{\mu}{\otimes} \arr{A},
    \end{equation*}
    where $\mu$ sends a tuple $(v_1, v_2, \ldots, v_n)$ to $v_1$.
\end{defn}

In other words, $\omega(\arr{A})$ effectively removes duplicate rows from $\arr{A}$. This may be made precise in terms of weak and strong equivalence:

\begin{defn}[strong{,} weak equivalence of standard arrays]
    Two standard associative arrays $\arr{A}, \arr{B}$ are \emph{strongly equivalent}, denoted $\arr{A} \equiv \arr{B}$, if there exists a bijection $f \colon I_\arr{A} \rightarrow I_\arr{B}$ such that $\arr{A}(m, n) = \arr{B}(f(m), n)$ for all $m, n$.
    
    $\arr{A}$ and $\arr{B}$ are \emph{weakly equivalent}, denoted $\arr{A} \approx \arr{B}$, if $\omega(\arr{A})\equiv \omega(\arr{B})$.
    
    Note that both strong and weak equivalence are transitive, reflexive, and symmetric.
\end{defn}

\begin{prop}\label{OmegaRows}
A row $r$ is in $\arr{A}$ if and only if it is in $\omega(\arr{A})$, and thus $\omega(\arr{A}) \approx \arr{A}$.
\end{prop}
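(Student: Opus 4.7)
The plan is to evaluate $\omega(\arr{A})(s, j)$ directly from the definition of the $\arrprod{\mu}{\otimes}$ product and show that, for each $s \in \mathbb{V}^{J_\arr{A}}$, the row $\omega(\arr{A})[s, \colo]$ equals $s$ itself when $s$ is a row of $\arr{A}$ and is the zero row otherwise; both directions of the biconditional and the weak equivalence then fall out immediately.

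First I would unpack $(\mathbb{I}_{I_\arr{A}, \mathbb{V}^{J_\arr{A}}, \arr{A}[-, \colo]})^\intercal$ as the array with row support $\mathbb{V}^{J_\arr{A}}$ and column support $I_\arr{A}$ whose entry at $(s, i)$ is $1$ exactly when $s = \arr{A}[i, \colo]$. Substituting into the definition of $\arrprod{\mu}{\otimes}$ yields
\begin{equation*}
\omega(\arr{A})(s, j) = \mu\bigl((\arr{A}(i, j))_{i \in I_s(j)}\bigr), \quad I_s(j) = \{i \in I_\arr{A} : \arr{A}[i, \colo] = s,\ \arr{A}(i, j) \neq 0\}.
\end{equation*}

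I would then run a two-case analysis. If $s = \arr{A}[i_0, \colo]$ for some $i_0 \in I_\arr{A}$, then every $i \in I_s(j)$ has $\arr{A}(i, j) = s(j)$, and $\mu$ returns $s(j)$ when $s(j) \neq 0$ (since $i_0 \in I_s(j)$) and returns $\mu(\emptyset) = 0 = s(j)$ otherwise; hence $\omega(\arr{A})[s, \colo] = s$ as a function. If instead $s$ is not a row of $\arr{A}$, then $I_s(j) = \emptyset$ for every $j$, so $\omega(\arr{A})[s, \colo]$ is identically zero.

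From here the biconditional is immediate: any row $r$ of $\arr{A}$ reappears in $\omega(\arr{A})$ at index $r$ by the first case, and any nonzero row of $\omega(\arr{A})$ must arise via the first case and hence equals some row of $\arr{A}$. Applying the same analysis to $\omega(\arr{A})$ itself shows $\omega(\omega(\arr{A}))$ has the identical indexing pattern (each nonzero row $r$ sitting at index $r$), so $\omega(\omega(\arr{A})) \equiv \omega(\arr{A})$ via the identity bijection, giving $\omega(\arr{A}) \approx \arr{A}$. The main obstacle I anticipate is the careful bookkeeping of the $\mu$-$\otimes$ product, especially the convention that $\mu$ is only applied to the tuple of nonzero terms so that $\mu(\emptyset) = 0$ supplies the zero-valued entries automatically; the one conceptual subtlety is that zero rows of $\omega(\arr{A})$ at indices $s \notin \{\arr{A}[i, \colo] : i \in I_\arr{A}\}$ should not be read as counterexamples to the biconditional.
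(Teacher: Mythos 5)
Your computation of $\omega(\arr{A})(s,j)$ from the definitions of $\mathbb{I}_{I_\arr{A}, \mathbb{V}^{J_\arr{A}}, \arr{A}[-,\colo]}$ and the $\arrprod{\mu}{\otimes}$ product is correct, and the two-case analysis (row of $\arr{A}$ versus not) together with the observation that $\omega$ is idempotent up to strong equivalence does establish both the biconditional and $\omega(\arr{A}) \approx \arr{A}$. The paper states this proposition without proof, and your direct unwinding of the definition is exactly the argument it implicitly relies on; your explicit flagging of the zero-row convention (indices $s$ that are not rows of $\arr{A}$ yield identically-zero rows that must not be counted) is a point the paper itself glosses over, so no gap here.
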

 
\begin{lemma}\label{OmegaEquiv}
    All the rows in $\omega(\arr{A})$ are distinct. As a corollary, if all the rows of $\arr{A}$ are distinct, then $\omega(\arr{A})\equiv \arr{A}$. Additionally, two standard arrays $\arr{A}$ and $\arr{B}$ are weakly equivalent if and only if they have the same rows (not counting multiplicity).
\end{lemma}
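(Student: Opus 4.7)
The plan is to first unwind the definition of $\omega(\arr{A})$ into an explicit formula, showing that the row of $\omega(\arr{A})$ indexed by $r$ is the function $r$ itself. Specifically, the transposed array $(\mathbb{I}_{I_\arr{A}, \mathbb{V}^{J_\arr{A}}, \arr{A}[-, \colo]})^\intercal$ has entry $1$ at position $(r, k_1)$ precisely when $\arr{A}[k_1, \colo] = r$, and $0$ otherwise. Plugging this into the definition of $\arrprod{\mu}{\otimes}$, the entry $\omega(\arr{A})(r, k_2)$ is $\mu$ applied to the tuple of nonzero terms $\arr{A}(k_1, k_2)$ ranging over those $k_1$ with $\arr{A}[k_1, \colo] = r$. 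Every such value equals $r(k_2)$, and since $\mu$ projects onto its first coordinate, I obtain $\omega(\arr{A})(r, k_2) = r(k_2)$ whenever $r$ is a row of $\arr{A}$.

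From this explicit formula, part (1) is immediate: distinct row indices $r \neq r'$ of $\omega(\arr{A})$ yield distinct row functions, since the $r$-indexed row of $\omega(\arr{A})$ is literally $r$. For the corollary, if all rows of $\arr{A}$ are distinct then $\arr{A}[-, \colo]$ is injective, providing a bijection between $I_\arr{A}$ and the row support of $\omega(\arr{A})$ (the image of this map); by the explicit formula, this bijection identifies corresponding rows on each side, witnessing strong equivalence $\omega(\arr{A}) \equiv \arr{A}$.

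For the final statement, the forward direction chains together weak equivalence, which gives $\omega(\arr{A}) \equiv \omega(\arr{B})$, the fact that strong equivalence preserves the set of rows, and Proposition \ref{OmegaRows}, which transfers this equality of row sets back to $\arr{A}$ and $\arr{B}$. For the converse, if $\arr{A}$ and $\arr{B}$ have the same set of rows, then by Proposition \ref{OmegaRows} so do $\omega(\arr{A})$ and $\omega(\arr{B})$; by part (1) both have pairwise distinct rows indexed by this common set, and by the explicit formula the identity map on the common row support is the required bijection for strong equivalence. The main subtlety I anticipate is pinning down precisely the row support of $\omega(\arr{A})$ as the image of $\arr{A}[-, \colo]$ inside $\mathbb{V}^{J_\arr{A}}$, so that the ``identity map'' bijection in the converse of (3) is well-defined between the two identically-indexed row supports, and so that the claim ``rows are distinct'' in part (1) has content beyond a trivial statement about row labels.
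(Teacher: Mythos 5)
Your proposal is correct and takes essentially the same route as the paper: both arguments rest on the key computation that $\omega(\arr{A})[r, \colo] = r$ for every row $r$ of $\arr{A}$ (so distinct row keys of $\omega(\arr{A})$ index distinct rows), combined with Proposition~\ref{OmegaRows} to transfer statements about rows between $\arr{A}$, $\arr{B}$ and their regularizations. The paper compresses this into a single sentence; you simply unwind the definition of $\omega$ and the equivalences in more detail, and your care about identifying the row support of $\omega(\arr{A})$ with the image of $\arr{A}[-, \colo]$ is a reasonable clarification rather than a departure.
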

\begin{proof}
    This follows by Proposition~\ref{OmegaRows} and the fact that $\omega(\arr{A})[r, \colo] = \omega(\arr{A})[r', \colo]$ if and only if $r = r'$, since $\omega(\arr{A})[r, \colo] = r$ and $\omega(\arr{A})[r', \colo] = r'$.
\end{proof}

\section{Fuzzy arrays}\label{FArrays}

In this section we combine standard associative arrays with multisets to define fuzzy associative arrays, which are a mathematical representation of fuzzy databases.

\begin{defn}[fuzzy associative array]
    A \emph{fuzzy associative array} (or a \emph{fuzzy array} or an \emph{array}) is an ordered pair $\farray{A}=(\arr{A}, \varphi)$ where $\arr{A}\colon I_\arr{A}\times J_\arr{A}\rightarrow \mathbb{V}$ is a standard associative array, and $\varphi\colon \mathbb{V}^{J_\arr{A}}\rightarrow \{\text{finite multisets of fuzzy values}\}$ is a function (called the \emph{fuzzy component} of $\farray{A}$).

    We define $\farray{A}[k_1, :]\coloneqq \arr{A}[k_1, :]$ for all $k_1\in I_\arr{A}$.
\end{defn}

The reason for defining fuzzy arrays as such is that given an array $\farray{A}$ and a row $r$ in $\farray{A}$, $\varphi_\farray{A}(r)$ is a multiset corresponding to the degree that $r$ belongs to $\farray{A}$. For example, suppose one wished to create a database $\farray{A}$ of all tall people. Then if $r$ is the row containing the single value ``John" corresponding to the column key ``Name", each fuzzy value in the multiset $\varphi_\farray{A}(r)$ would represent a person named John, and the fuzzy value would represent the degree to which that John is tall.

\begin{defn}[zero row]
    A row $r$ of an array $\farray{A}$ is a \emph{zero row} if $r \equiv 0$ or $\varphi_\farray{A}(r) \equiv \emptyset$. $r$ is a \emph{nonzero row} otherwise.
\end{defn}

\begin{remark}\label{EmptyRows}
    By convention, $\varphi_\farray{A}(r) = \emptyset$ whenever $r$ is a row not in $\farray{A}$ or is a zero row.
\end{remark}

Because we use multisets of fuzzy values to encapsulate the degrees of belonging of each row in a fuzzy array, there is no need to have duplicate rows in arrays. As such, we define the following:

\begin{defn}[regular array]
    We say that a fuzzy array $\farray{A}$ is \emph{regular} if all rows are distinct.
\end{defn}

Throughout this paper, we assume arrays are regular unless stated otherwise. A nonregular array $\farray{A}$ can be made regular by replacing $\arr{A}$ with $\omega(\arr{A})$ and $\varphi_\farray{A}$ with $\lambda r \colon \biguplus_{i \in I_\arr{A}, \arr{A}[i] = r}{\varphi_\farray{A}(i)}$.

\begin{defn}[zero array]
    The zero array $\mathbf{0}$ is the array $(\arr{A}, \varphi)$ such that $\arr{A}\colon (\emptyset\times\emptyset)\rightarrow \mathbb{V}$. Consequently, by Remark \ref{EmptyRows}, $\varphi$ maps all rows to $\emptyset$.
\end{defn}

\begin{defn}[sub-array]
    Given two arrays $\farray{A}$ and $\farray{B}$, we say $\farray{A}\subseteq \farray{B}$ if for each row $r$ in $\farray{A}$ or $\farray{B}$, $\varphi_\farray{A}(r)\subseteq \varphi_\farray{B}(r)$.
\end{defn}

\begin{defn}[zero-padding]\label{ZeroPadding}
    Given an array $\farray{A}=(\arr{A}, \varphi_\farray{A})$, a set of row keys $I$, and a set of column keys $J$, we can \emph{zero-pad} $\farray{A}$ with the key sets $I$ and $J$ by defining
    \begin{equation*}
        \mathrm{pad}_{I, J}(\farray{A}) \coloneqq (\arr{A}_{I, J},
        \lambda r \colon \varphi_\farray{A}(r|_{J_\arr{A}})
        ),
    \end{equation*}
    where $\arr{A}_{I, J} \coloneqq \mathbb{I}_{I_\arr{A} \cup I, I_\arr{A}} \arrprod{\oplus}{\otimes} \arr{A} \arrprod{\oplus}{\otimes} \mathbb{I}_{J_\arr{A}, J_\arr{A} \cup J}$.
\end{defn}

Throughout this paper, all operations will be done up to appropriate zero-padding. In other words, if operations are carried out on incompatible standard arrays or multisets, we will implicitly zero-pad to ensure the standard arrays or multisets are compatible with the operation.

\begin{defn}[strong{,} weak equivalence of fuzzy arrays]
    Arrays $\farray{A}$ and $\farray{B}$ (assumed to have equal column keys via zero-padding) are \emph{strongly equivalent}, denoted $\farray{A} \equiv \farray{B}$, 
    if there is a partial injection $f \colon I_\arr{A} \rightharpoonup I_\arr{B}$ such that:
    \begin{enumerate}[(i)]
        \item $\arr{A}[i, \colo] = \arr{B}[f(i), \colo]$ for all $i \in \dom(f)$.
        \item $\varphi_\farray{A}(\arr{A}[i, \colo]) \equiv \varphi_\farray{B}(\arr{B}[f(i), \colo])$ for all $i \in \dom(f)$.
        \item $\varphi_\farray{A}(\arr{A}[i, \colo]) \equiv \varphi_\farray{B}(\arr{B}[i', \colo]) \equiv \emptyset$ for all $i \notin \dom(f)$ and $i' \notin \ran(f)$.
    \end{enumerate}

    If (ii) above is replaced by the condition that $\bigvee{\varphi_\farray{A}(\arr{A}[i, \colo])} = \bigvee{\varphi_\farray{B}(\arr{B}[f(i), \colo])}$ for all $i \in \dom(f)$, then $\farray{A}$ and $\farray{B}$ are \emph{weakly equivalent}, denoted $\farray{A} \approx \farray{B}$.

    Note that strong equivalence implies weak equivalence, and both equivalences are transitive, reflexive, and symmetric.
\end{defn}

\begin{remark} 
    For any array $\farray{A}$ and key sets $I$, $J$ we have $\farray{A}\equiv \mathrm{pad}_{I, J}(\farray{A})$. This ensures the relations defined in the sequel are well-defined up to equivalences if we zero-pad.
\end{remark}

\section{Unary Fuzzy Relations}\label{Relations}

In this section we discuss various unary operations on fuzzy arrays and some of their relevant properties.

\begin{defn}[projection]
    The \emph{projection} of $\farray{A}=(\arr{A}, \varphi_\farray{A})$ onto a set of column keys $J$ is defined by 
    \begin{equation*}
        \Pi_J(\farray{A}) \coloneqq \Biggl(\omega(\arr{A} \arrprod{\oplus}{\otimes} \mathbb{I}_J), \lambda r \colon \biguplus_{s, s|_J = r} \varphi_\farray{A}(s) \Biggr).
    \end{equation*}
\end{defn}

\begin{prop}
    $\Pi_\emptyset(\farray{A})\equiv \mathbf{0}$.
\end{prop}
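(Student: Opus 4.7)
The plan is to unpack the definition of $\Pi_\emptyset(\farray{A})$, show that its only candidate row is a zero row, and then conclude strong equivalence with $\mathbf{0}$ via the empty partial injection.

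First, I would note that $\mathbb{I}_\emptyset$ has empty row and column supports, so the standard-array part $\omega(\arr{A} \arrprod{\oplus}{\otimes} \mathbb{I}_\emptyset)$ of $\Pi_\emptyset(\farray{A})$ lives over an empty column support. Every row of $\arr{A} \arrprod{\oplus}{\otimes} \mathbb{I}_\emptyset$ is then forced to be the unique empty function $r_\emptyset \colon \emptyset \to \mathbb{V}$, and by Lemma~\ref{OmegaEquiv} the regularization has at most one row, namely $r_\emptyset$. Moreover, $r_\emptyset$ is trivially a zero row (vacuously, $r_\emptyset \equiv 0$), so by Remark~\ref{EmptyRows} the fuzzy component satisfies $\varphi_{\Pi_\emptyset(\farray{A})}(r_\emptyset) \equiv \emptyset$ by convention.

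To conclude, since $\Pi_\emptyset(\farray{A})$ and $\mathbf{0}$ already share the column-key set $\emptyset$, no zero-padding is needed. I would take $f$ to be the empty partial injection $I_{\Pi_\emptyset(\farray{A})} \rightharpoonup \emptyset = I_{\mathbf{0}}$. Conditions (i) and (ii) of strong equivalence hold vacuously over $\dom(f) = \emptyset$, and condition (iii) reduces to checking $\varphi_{\Pi_\emptyset(\farray{A})}(r_\emptyset) \equiv \emptyset$ for the (at most one) row outside $\dom(f)$, which is precisely what was established above. The main subtle point of the argument is reconciling the explicit $\lambda$-formula $\lambda r \colo \biguplus_{s,\, s|_\emptyset = r}\varphi_\farray{A}(s)$ — which on the empty row $r_\emptyset$ would naively produce $\biguplus_{s}\varphi_\farray{A}(s)$, potentially nonempty — with the convention of Remark~\ref{EmptyRows}; the proof requires treating that remark as overriding the defining formula whenever the argument is a zero row.
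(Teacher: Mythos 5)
Your proof is correct; the paper states this proposition without proof, and your argument is the natural one: the standard-array part collapses to at most one copy of the empty row, which is vacuously a zero row, so Remark~\ref{EmptyRows} forces its fuzzy component to be $\emptyset$ and the empty partial injection witnesses strong equivalence with $\mathbf{0}$. You are also right to flag the tension between the defining formula $\lambda r \colon \biguplus_{s,\, s|_\emptyset = r}\varphi_\farray{A}(s)$ (which would give a generally nonempty multiset on the empty row) and the zero-row convention --- the proposition is only true if the convention overrides the formula, so making that resolution explicit is the correct and honest reading of the paper's definitions.
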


\begin{prop}
    $\Pi_{J_1}(\Pi_{J_2}(\farray{A}))\equiv \Pi_{J_1\cap J_2}(\farray{A})$.
\end{prop}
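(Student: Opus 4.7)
The plan is to prove strong equivalence $\Pi_{J_1}(\Pi_{J_2}(\farray{A}))\equiv \Pi_{J_1\cap J_2}(\farray{A})$ by exhibiting the natural bijection between row-keys on the two sides (which then also yields weak equivalence as a corollary). A fuzzy array equivalence reduces to checking two things: that the rows match, and that the fuzzy components agree on each matching row, so I will handle these in turn.

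First, I would analyze the standard-array part. By Proposition~\ref{OmegaRows} (together with Lemma~\ref{OmegaEquiv}), the rows of $\omega(\arr{A}\arrprod{\oplus}{\otimes}\mathbb{I}_J)$ are exactly the distinct restrictions $t|_J$ of rows $t$ of $\arr{A}$ to the column support $J$. Applying this once to the inner projection, the rows of $\Pi_{J_2}(\farray{A})$ are the distinct restrictions $t|_{J_2}$; applying it again (with $J_1$ now), the rows of $\Pi_{J_1}(\Pi_{J_2}(\farray{A}))$ are the distinct restrictions $(t|_{J_2})|_{J_1}$. The key observation is the elementary identity $(t|_{J_2})|_{J_1} = t|_{J_1\cap J_2}$, which immediately identifies these with the rows of $\Pi_{J_1\cap J_2}(\farray{A})$. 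This gives the required bijection $f$ between row supports.

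Next, I would verify that the fuzzy components match along $f$. Unfolding the definition once yields
\begin{equation*}
    \varphi_{\Pi_{J_1}(\Pi_{J_2}(\farray{A}))}(r) = \biguplus_{s,\, s|_{J_1}=r}\ \biguplus_{t,\, t|_{J_2}=s} \varphi_\farray{A}(t),
\end{equation*}
where $s$ ranges over rows of $\Pi_{J_2}(\farray{A})$ and $t$ over rows of $\farray{A}$. Since each row $t$ of $\farray{A}$ determines its unique intermediate $s=t|_{J_2}$, and since $s|_{J_1}=r$ iff $t|_{J_1\cap J_2}=r$ (by the same restriction identity above), the nested disjoint unions collapse to $\biguplus_{t,\, t|_{J_1\cap J_2}=r}\varphi_\farray{A}(t)$, which is exactly $\varphi_{\Pi_{J_1\cap J_2}(\farray{A})}(r)$. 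Rows $r$ outside either projection map only to the empty multiset on both sides, satisfying condition (iii) of strong equivalence.

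The main obstacle is essentially bookkeeping: one has to be careful that the inner projection already applies $\omega$ (so its rows are distinct and supported on $J_2$), that the outer $\omega$ does not create spurious identifications, and that implicit zero-padding lines up the column supports $J_2$, $J_1$, and $J_1\cap J_2$ consistently so that ``$s|_{J_1} = s|_{J_1 \cap J_2}$'' genuinely holds for rows $s$ of $\Pi_{J_2}(\farray{A})$. Once the restriction identity $(t|_{J_2})|_{J_1} = t|_{J_1\cap J_2}$ is in hand, everything else is routine.
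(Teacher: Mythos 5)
The paper states this proposition without proof, so there is no official argument to compare against; your proposal is correct and is the natural one. The two points that matter --- the restriction identity $(t|_{J_2})|_{J_1} = t|_{J_1\cap J_2}$ (which holds because $r|_J$ is defined as restriction to $J_\arr{A}\cap J$, so both sides have domain $J_\arr{A}\cap J_1\cap J_2$), and the collapse of the nested disjoint unions using that each row $t$ of the regular array $\farray{A}$ determines a unique intermediate row $s=t|_{J_2}$ of $\Pi_{J_2}(\farray{A})$, whose rows are made distinct by $\omega$ --- are both present and handled correctly, with the zero-padding bookkeeping appropriately flagged.
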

 
\begin{defn}[selection]
    Let $\psi$ be a \emph{fuzzy condition} (a map from $\mathbb{V}^{J_\arr{A}}$ to $[0, 1]$) on the rows of $\farray{A} = (\arr{A}, \varphi_\farray{A})$. Then the \emph{selection} of $\farray{A}$ based on $\psi$ is 
    \begin{equation*}
        \sigma_{\psi}(\farray{A}) \coloneqq (\arr{A}, \lambda r \colon \{ x \wedge \psi(r) \mid x \in \varphi_\farray{A}(r) \}).
    \end{equation*}
\end{defn}

\begin{prop}
    If $\psi\equiv 1$, then $\sigma_{\psi}$ is the identity map up to strong equivalence. If $\psi\equiv 0$, then $\sigma_{\psi}$ is the zero map up to strong equivalence.
\end{prop}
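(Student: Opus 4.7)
The plan is to unpack the definition of $\sigma_\psi$ and treat each case directly, observing that the standard array $\arr{A}$ is preserved by selection, so only the fuzzy component needs to be analyzed. For both cases, I would exhibit an explicit partial injection witnessing strong equivalence.

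For the case $\psi \equiv 1$: For every row $r$, $\psi(r) = 1$, so $x \wedge \psi(r) = x \wedge 1 = x$ for each $x \in \varphi_\farray{A}(r)$. Thus $\varphi_{\sigma_\psi(\farray{A})}(r) = \{x \mid x \in \varphi_\farray{A}(r)\} \equiv \varphi_\farray{A}(r)$, and $\sigma_\psi(\farray{A}) = (\arr{A}, \varphi_\farray{A}) = \farray{A}$ on the nose. Strong equivalence is witnessed by the identity partial injection on $I_\arr{A}$: conditions (i) and (ii) hold because the standard arrays and fuzzy components agree, and (iii) is vacuous since the identity is total.

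For the case $\psi \equiv 0$: For every row $r$, $\psi(r) = 0$, so $x \wedge \psi(r) = 0$, giving $\varphi_{\sigma_\psi(\farray{A})}(r) = \{0, 0, \ldots, 0\}$ (a multiset of $|\varphi_\farray{A}(r)|$ zeros). The key observation is that multiset equivalence only compares $\mathrm{Count}$ on $x \in (0, 1]$, so this multiset of zeros is equivalent to $\emptyset$. Hence every row of $\sigma_\psi(\farray{A})$ has empty fuzzy component. To show $\sigma_\psi(\farray{A}) \equiv \mathbf{0}$, I would (after zero-padding $\mathbf{0}$ to the column keys $J_\arr{A}$, or equivalently zero-padding $\sigma_\psi(\farray{A})$ downward) take $f \colon I_\arr{A} \rightharpoonup I_{\mathbf{0}} = \emptyset$ to be the empty partial injection. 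Then (i) and (ii) hold vacuously; (iii) requires $\varphi_{\sigma_\psi(\farray{A})}(r) \equiv \emptyset$ for all $r \notin \dom(f)$ (which is every row, and we just established) and $\varphi_\mathbf{0}(r') \equiv \emptyset$ for all $r' \notin \ran(f)$ (which holds by definition of $\mathbf{0}$ and Remark~\ref{EmptyRows}).

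Neither case is technically deep; the only subtlety worth flagging is making explicit use of the convention that multiset equivalence disregards the value $0$, which is what lets the selection annihilate to $\mathbf{0}$ even though its standard-array component is still $\arr{A}$. The implicit zero-padding in the background equivalence then handles the mismatch between $J_\arr{A}$ and the empty column support of $\mathbf{0}$, so no further bookkeeping is required.
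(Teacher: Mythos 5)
Your proof is correct, and the paper in fact states this proposition without proof, so your argument supplies exactly the routine verification the authors left implicit: the $\psi\equiv 1$ case is an identity on the nose, and the $\psi\equiv 0$ case reduces to the observation that a multiset of zeros is $\equiv\emptyset$ because $\mathrm{Count}$ is only compared on $(0,1]$, after which the empty partial injection witnesses $\sigma_\psi(\farray{A})\equiv\mathbf{0}$. Your handling of the zero-padding and of condition (iii) in the definition of strong equivalence is exactly right, and no gaps remain.
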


\begin{prop}
    If $\psi_1$, $\psi_2$ are two fuzzy conditions, then $\sigma_{\psi_1}(\sigma_{\psi_2}(\farray{A}))\equiv\sigma_{\psi_3}(\farray{A})$, where $\psi_3 = \lambda r \colon \psi_1(r)\wedge \psi_2(r)$.
\end{prop}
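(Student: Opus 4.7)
The plan is to unfold both sides using the definition of $\sigma$ and reduce the claim to associativity of $\wedge$ on fuzzy values, after which strong equivalence is witnessed by the identity partial injection $\mathrm{id}\colon I_\arr{A}\to I_\arr{A}$.

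First I would unfold the inner selection:
\begin{equation*}
\sigma_{\psi_2}(\farray{A}) = (\arr{A},\, \lambda r \colon \{x \wedge \psi_2(r) \mid x \in \varphi_\farray{A}(r)\}).
\end{equation*}
Then applying $\sigma_{\psi_1}$ using the same definition (and noting the underlying standard array is still $\arr{A}$, so the rows $r$ range over the same set) yields
\begin{equation*}
\sigma_{\psi_1}(\sigma_{\psi_2}(\farray{A})) = (\arr{A},\, \lambda r \colon \{(x \wedge \psi_2(r)) \wedge \psi_1(r) \mid x \in \varphi_\farray{A}(r)\}).
\end{equation*}
On the other hand,
\begin{equation*}
\sigma_{\psi_3}(\farray{A}) = (\arr{A},\, \lambda r \colon \{x \wedge (\psi_1(r) \wedge \psi_2(r)) \mid x \in \varphi_\farray{A}(r)\}).
\end{equation*}

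Next I would observe that because $\wedge$ on $[0,1]$ is associative and commutative, the multiset $\{(x\wedge \psi_2(r))\wedge \psi_1(r)\mid x\in\varphi_\farray{A}(r)\}$ is equal (as a multiset, with multiplicity preserved by the construction of the selection) to $\{x\wedge(\psi_1(r)\wedge \psi_2(r))\mid x\in\varphi_\farray{A}(r)\}$. Hence for every row $r$ of $\arr{A}$ the fuzzy components of $\sigma_{\psi_1}(\sigma_{\psi_2}(\farray{A}))$ and $\sigma_{\psi_3}(\farray{A})$ are identical (in particular equivalent).

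Finally, to conclude strong equivalence I would exhibit the partial injection $f = \mathrm{id}_{I_\arr{A}}$: condition (i) is immediate since both arrays have the same underlying standard array $\arr{A}$, condition (ii) is the multiset equality just established, and condition (iii) is vacuous since $\dom(f) = \ran(f) = I_\arr{A}$. There is no real obstacle here—the only subtlety worth stating explicitly is that the selection operator preserves the underlying standard array, so no rematching of row keys is needed, and the argument really does reduce to associativity of $\wedge$ applied pointwise inside the multiset comprehension.
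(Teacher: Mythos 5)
Your proof is correct and is the direct computation the paper evidently intends (the proposition is stated there without proof): unfold $\sigma$ twice, note that the underlying standard array is unchanged, and reduce to associativity and commutativity of $\wedge$ applied inside the multiset image, with the identity partial injection witnessing strong equivalence. No gaps.
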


\begin{defn}[renaming]
    Let $f \colon J \to J'$ be a bijection between sets of column keys. Then given an array $\farray{A} = (\arr{A}, \varphi_\farray{A})$ with $J_\arr{A} \subseteq J$, the \emph{renaming} of $\farray{A}$ via $f$ is 
    \begin{align*}
        \rho_{f}(\farray{A}) &\coloneqq \left(\arr{A} \arrprod{\oplus}{\otimes} \mathbb{I}_{f}, \lambda r \colon \varphi_\farray{A}((r \circ f)|_{J_\arr{A}}) \right).
    \end{align*}
\end{defn}

\begin{prop}
    Let $J$ be a universal set of column keys. Then $\rho_{J, J, \mathrm{id}_J}$ is the identity map up to strong equivalence.
\end{prop}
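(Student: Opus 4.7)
The plan is to unfold the definition of $\rho_{\mathrm{id}_J}$, identify the resulting array as a zero-padded version of $\farray{A}$, and then invoke the preceding remark that $\farray{A} \equiv \mathrm{pad}_{I, J}(\farray{A})$ to conclude strong equivalence.

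First I would observe that for $f = \mathrm{id}_J$ the associated array $\mathbb{I}_{\mathrm{id}_J}$ is just the identity standard array $\mathbb{I}_J$ on $J \times J$, and that $(r \circ \mathrm{id}_J)|_{J_\arr{A}} = r|_{J_\arr{A}}$. Substituting into the renaming definition yields
\begin{equation*}
  \rho_{\mathrm{id}_J}(\farray{A}) = \bigl(\arr{A} \arrprod{\oplus}{\otimes} \mathbb{I}_J,\; \lambda r \colon \varphi_\farray{A}(r|_{J_\arr{A}})\bigr).
\end{equation*}

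Next I would compare this with $\mathrm{pad}_{\emptyset, J}(\farray{A})$ as given by Definition~\ref{ZeroPadding}. The fuzzy component of the zero-padding is literally $\lambda r \colon \varphi_\farray{A}(r|_{J_\arr{A}})$, which matches on the nose. For the standard array component, the zero-padding produces $\mathbb{I}_{I_\arr{A}, I_\arr{A}} \arrprod{\oplus}{\otimes} \arr{A} \arrprod{\oplus}{\otimes} \mathbb{I}_{J_\arr{A}, J_\arr{A} \cup J}$; since $J_\arr{A} \subseteq J$ by hypothesis on the universal key set $J$, and the left identity $\mathbb{I}_{I_\arr{A}, I_\arr{A}}$ acts trivially, this is the same (after applying the implicit zero-padding convention to make $\arr{A}$ and $\mathbb{I}_J$ dimensionally compatible) as $\arr{A} \arrprod{\oplus}{\otimes} \mathbb{I}_J$. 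Hence $\rho_{\mathrm{id}_J}(\farray{A})$ and $\mathrm{pad}_{\emptyset, J}(\farray{A})$ agree as ordered pairs.

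Finally, I would apply the remark that $\farray{A} \equiv \mathrm{pad}_{I, J}(\farray{A})$ for any key sets $I, J$, which gives $\mathrm{pad}_{\emptyset, J}(\farray{A}) \equiv \farray{A}$; by transitivity of strong equivalence, $\rho_{\mathrm{id}_J}(\farray{A}) \equiv \farray{A}$. The only real obstacle is bookkeeping: verifying that the product $\arr{A} \arrprod{\oplus}{\otimes} \mathbb{I}_J$ produced by the renaming definition lines up exactly with the standard-array half of $\mathrm{pad}_{\emptyset, J}(\farray{A})$ under the paper's implicit zero-padding convention, so that the match is literal rather than only up to another layer of equivalence requiring a separate argument.
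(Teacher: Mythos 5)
The paper states this proposition without proof, so there is nothing to compare against; your argument is correct and is the natural verification. Unfolding $\rho_{\mathrm{id}_J}$ gives $\bigl(\arr{A} \arrprod{\oplus}{\otimes} \mathbb{I}_J,\ \lambda r \colon \varphi_\farray{A}(r|_{J_\arr{A}})\bigr)$, which coincides with $\mathrm{pad}_{\emptyset, J}(\farray{A})$ since $J_\arr{A} \cup J = J$ for a universal $J$, and the preceding remark $\farray{A} \equiv \mathrm{pad}_{I,J}(\farray{A})$ then closes the argument; your flagged bookkeeping point (that $\arr{A} \arrprod{\oplus}{\otimes} \mathbb{I}_J$ under the implicit padding convention equals $\arr{A} \arrprod{\oplus}{\otimes} \mathbb{I}_{J_\arr{A}, J}$) is the only nontrivial check and it goes through.
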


\section{Binary Fuzzy Relations}\label{Relations2}

In this section we discuss binary operations on fuzzy arrays. By zero-padding, we assume without loss of generality that all fuzzy arrays have the same column keys.

\begin{defn}[union of standard associative arrays]
    The \emph{union} of two standard associative arrays $\arr{A}$ and $\arr{B}$ is 
    \begin{equation*}
        \arr{A}\amalg \arr{B}\coloneqq \omega((\mathbb{I}_{I_\arr{A}\times \{1\}, I_\arr{A}}\arr{A})\oplus (\mathbb{I}_{I_\arr{B}\times \{2\}, I_\arr{B}}\arr{B})).
    \end{equation*}
\end{defn}

\begin{example}
~
\begin{equation*}
    \begin{array}{| c |}
        \hline
        \arr{A} \\ \hline
        \!\!\begin{array}{l | l}
            Name & Age \\ \hline
            John & 30 \\
            Sam & 28
        \end{array}\!\! \\ \hline
    \end{array}
    \amalg
    \begin{array}{| c |}
        \hline
        \arr{B} \\ \hline
        \!\!\begin{array}{l | l}
            Name & Age \\ \hline
            John & 30 \\
            John & 35
        \end{array}\!\! \\ \hline
    \end{array}
    =
    \begin{array}{| c |}
        \hline
        \arr{A} \amalg \arr{B} \\ \hline
        \!\!\begin{array}{l | l}
            Name & Age \\ \hline
            John & 30 \\
            John & 35 \\
            Sam & 28
        \end{array}\!\! \\ \hline
    \end{array} 
\end{equation*}
\end{example}

\begin{defn}[disjoint union of fuzzy arrays]
    Given two arrays $\farray{A}=(\arr{A}, \varphi_\farray{A})$ and $\farray{B}=(\arr{B}, \varphi_\farray{B})$, their \emph{disjoint union} is 
    \begin{equation*}
        \farray{A} \uplus \farray{B} \coloneqq (\arr{A}\amalg \arr{B}, \lambda r \colon \varphi_\farray{A}(r) \uplus \varphi_\farray{B}(r)).
    \end{equation*}
\end{defn}

\begin{defn}[union{,} intersection of fuzzy arrays]
    The \emph{union} of two fuzzy arrays $\farray{A}$ and $\farray{B}$ is 
    \begin{equation*}
        \farray{A} \cup \farray{B} \coloneqq (\arr{A} \amalg \arr{B}, \lambda r \colon \varphi_\farray{A}(r) \cup \varphi_\farray{B}(r)).
    \end{equation*}
    
    Intersections are defined analogously with $\varphi_{\farray{A} \cap \farray{B}} \coloneqq \lambda r \colon \varphi_\farray{A}(r) \cap \varphi_\farray{B}(r)$.
\end{defn}

\begin{defn}[difference of fuzzy arrays]
    The \emph{difference} of two fuzzy arrays $\farray{A}$ and $\farray{B}$ is 
    \begin{equation*}
        \farray{A} \setminus \farray{B} \coloneqq (\arr{A} \amalg \arr{B}, \lambda r \colon \{\varphi_\farray{A}(r)[k] \cdot \delta_{k, \farray{A}, \farray{B}}(r) \mid k \in \mathbb{N}_{>0}\}),
    \end{equation*}
    where for each row $r$ and $k \in \mathbb{N}_{>0}$, $\delta_{k, \farray{A}, \farray{B}}(r) = 1$ if $\varphi_\farray{A}(r)[k]> \varphi_\farray{B}(r)[k]$ and $\delta_{k, \farray{A}, \farray{B}}(r) = 0$ otherwise.
\end{defn}

\begin{lemma}\label{SqcupAss}
    $\amalg$ is associative up to strong equivalence of standard arrays.
\end{lemma}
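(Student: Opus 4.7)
The plan is to reduce associativity of $\amalg$ to a simple set-theoretic identity on rows. The crucial observation is that both $(\arr{A}\amalg\arr{B})\amalg\arr{C}$ and $\arr{A}\amalg(\arr{B}\amalg\arr{C})$ have the outermost operation $\omega$ applied, so by Lemma~\ref{OmegaEquiv} each has all distinct rows. Two standard arrays with all distinct rows are strongly equivalent if and only if they have the same set of rows (since the correspondence of equal rows then induces a unique bijection of row keys), so it suffices to prove the two sides have the same set of rows.

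First I would establish the following lemma-like characterization: the rows of $\arr{A}\amalg\arr{B}$ are exactly the rows of $\arr{A}$ together with the rows of $\arr{B}$. The identity arrays $\mathbb{I}_{I_\arr{A}\times\{1\}, I_\arr{A}}$ and $\mathbb{I}_{I_\arr{B}\times\{2\}, I_\arr{B}}$ serve only to relabel row keys by tagging them with $1$ and $2$ respectively, leaving row contents unchanged. Since the tagged row-key sets $I_\arr{A}\times\{1\}$ and $I_\arr{B}\times\{2\}$ are disjoint, the $\oplus$-sum reduces to the disjoint concatenation of the two tagged arrays, whose row set is precisely $\{\text{rows of }\arr{A}\}\cup\{\text{rows of }\arr{B}\}$. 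Applying Proposition~\ref{OmegaRows}, the outer $\omega$ preserves the set of rows, so $\arr{A}\amalg\arr{B}$ has this same row set.

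Applying this characterization twice yields that the rows of $(\arr{A}\amalg\arr{B})\amalg\arr{C}$ are the rows of $\arr{A}$, $\arr{B}$, or $\arr{C}$, and by symmetry the same holds for $\arr{A}\amalg(\arr{B}\amalg\arr{C})$. Hence by Lemma~\ref{OmegaEquiv} the two sides are weakly equivalent, and since both are already regularized (Lemma~\ref{OmegaEquiv} gives $\omega(\arr{X})\equiv\arr{X}$ when $\arr{X}$ has distinct rows), weak equivalence coincides with strong equivalence in this case via transitivity of $\equiv$.

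The main obstacle is the bookkeeping associated with the row-key relabeling: one must carefully verify that the $\mathbb{I}$-products truly just tag row keys without modifying any entries, and that $\oplus$ over disjoint row-key sets really is the disjoint union of tagged arrays. Once this is done, associativity follows without any genuine calculation from Proposition~\ref{OmegaRows}, Lemma~\ref{OmegaEquiv}, and the trivial identity $(X\cup Y)\cup Z = X\cup(Y\cup Z)$ on sets of rows.
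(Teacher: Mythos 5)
Your proposal is correct and follows essentially the same strategy as the paper's proof: both arguments use Proposition~\ref{OmegaRows} to show each side of the associativity identity has the same rows (namely the rows of $\arr{A}$, $\arr{B}$, and $\arr{C}$ combined), deduce weak equivalence, and then upgrade to strong equivalence via Lemma~\ref{OmegaEquiv} because both sides, being outputs of $\omega$, have all rows distinct. Your version is a cleaner packaging of the same ideas --- working directly with row sets rather than the paper's explicit chain of tagged direct sums and row-key renamings --- but it introduces no new lemma or decomposition.
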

\begin{proof}
    Take any standard arrays $\arr{A}, \arr{B}, \arr{C}$. Let 
    \begin{equation*}
        \arr{D} \coloneqq \arr{A} \amalg \arr{B} = \omega((\mathbb{I}_{I_\arr{A} \times \{1\}} \arrprod{\oplus}{\otimes} \arr{A}) \oplus (\mathbb{I}_{I_\arr{B} \times \{2\}} \arrprod{\oplus}{\otimes} \arr{B})).
    \end{equation*}
    Then 
    \begin{align*}
        (\arr{A} \amalg \arr{B}) \amalg \arr{C} & = \omega((\mathbb{I}_{I_\arr{D} \times \{1\}} \arrprod{\oplus}{\otimes} \arr{D}) \oplus (\mathbb{I}_{I_\arr{C} \times \{2\}} \arrprod{\oplus}{\otimes} \arr{C}) \\
        & \approx (\mathbb{I}_{I_\arr{D} \times \{1\}} \arrprod{\oplus}{\otimes} \arr{D}) \oplus (\mathbb{I}_{I_\arr{C} \times \{2\}} \arrprod{\oplus}{\otimes} \arr{C})
    \end{align*}
    by Proposition~\ref{OmegaRows}. We also have 
    \begin{align*}
        \mathbb{I}_{I_\arr{D}\times \{1\}} \arrprod{\oplus}{\otimes} \arr{D} & \equiv \arr{D} \\
        & \approx (\mathbb{I}_{I_\arr{A} \times \{1\}} \arrprod{\oplus}{\otimes} \arr{A}) \oplus (\mathbb{I}_{I_\arr{B} \times \{2\}} \arrprod{\oplus}{\otimes} \arr{B}) \\
        & \equiv (\mathbb{I}_{I_\arr{A} \times \{1\}} \arrprod{\oplus}{\otimes} \arr{A}) \oplus (\mathbb{I}_{I_\arr{B} \times \{3\}} \arrprod{\oplus}{\otimes} \arr{B}).
    \end{align*} 
    Since $\mathbb{I}_{I_\arr{D} \times \{1\}} \arrprod{\oplus}{\otimes} \arr{D}$ and $\mathbb{I}_{I_\arr{C}\times \{2\}} \arrprod{\oplus}{\otimes} \arr{C}$ share no row keys, and neither do $(\mathbb{I}_{I_\arr{A} \times \{1\}} \arrprod{\oplus}{\otimes} \arr{A}) \oplus (\mathbb{I}_{I_\arr{B} \times \{3\}} \arrprod{\oplus}{\otimes} \arr{B})$ and $\mathbb{I}_{I_\arr{C} \times \{2\}} \arrprod{\oplus}{\otimes} \arr{C}$, this means 
    \begin{align*}
        & (\mathbb{I}_{I_\arr{D} \times \{1\}} \arrprod{\oplus}{\otimes} \arr{D}) \oplus (\mathbb{I}_{I_\arr{C} \times \{2\}} \arrprod{\oplus}{\otimes} \arr{C}) \\
        & \approx (\mathbb{I}_{I_\arr{A} \times \{1\}} \arrprod{\oplus}{\otimes} \arr{A}) \oplus (\mathbb{I}_{I_\arr{B} \times \{3\}} \arrprod{\oplus}{\otimes} \arr{B}) \oplus (\mathbb{I}_{I_\arr{C} \times \{2\}} \arrprod{\oplus}{\otimes} \arr{C}).
    \end{align*}
     
    This implies $(\arr{A}\amalg \arr{B})\amalg \arr{C} \approx (\arr{B}\amalg \arr{C})\amalg \arr{A}\equiv \arr{A}\amalg(\arr{B}\amalg \arr{C})$ via an appropriate renaming of the row keys. But by Proposition~\ref{OmegaRows}, $(\mathbb{I}_{I_{(\arr{A} \amalg \arr{B}) \times \{1\}}} \arrprod{\oplus}{\otimes} (\arr{A}\amalg \arr{B})) \oplus (\mathbb{I}_{I_\arr{C} \times \{2\}} \arrprod{\oplus}{\otimes} \arr{C}) \approx (\arr{A} \amalg \arr{B}) \amalg \arr{C}$ and similarly $(\mathbb{I}_{I_\arr{A} \times \{1\}} \arrprod{\oplus}{\otimes} \arr{A}) \oplus (\mathbb{I}_{I_{(\arr{B} \amalg \arr{C})} \times \{2\}} \arrprod{\oplus}{\otimes} (\arr{B} \amalg \arr{C})) \approx \arr{A} \amalg (\arr{B} \amalg \arr{C})$, so 
    \begin{align*}
        & \mathbb{I}_{I_{(\arr{A} \amalg \arr{B}) \times\{1\}}} \arrprod{\oplus}{\otimes} (\arr{A} \amalg \arr{B}) \oplus \mathbb{I}_{I_\arr{C} \times \{2\}} \\
        & \approx \mathbb{I}_{I_\arr{A} \times \{1\}} \arrprod{\oplus}{\otimes} \arr{A} \oplus \mathbb{I}_{I_{(\arr{B} \amalg \arr{C})} \times \{2\}} \arrprod{\oplus}{\otimes} (\arr{B} \amalg \arr{C}).
    \end{align*}
    By Lemma~\ref{OmegaEquiv} this means $(\arr{A} \amalg \arr{B}) \amalg \arr{C} \equiv \arr{A} \amalg (\arr{B} \amalg \arr{C})$.
\end{proof}

\begin{theorem}
    $\uplus$, $\cup$, and $\cap$ are commutative and associative up to strong equivalence of fuzzy arrays.
\end{theorem}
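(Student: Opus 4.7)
The plan is to reduce the theorem to the corresponding statements about the underlying standard associative arrays and the multiset-valued fuzzy components. All three binary operations share $\arr{A} \amalg \arr{B}$ as their underlying standard array, so the first task is to show that $\amalg$ is commutative and associative up to strong equivalence of standard arrays. Associativity is exactly Lemma~\ref{SqcupAss}. For commutativity, I would note that $\arr{A} \amalg \arr{B}$ and $\arr{B} \amalg \arr{A}$ both have distinct rows (by the outer $\omega$) and the same set of rows (the union of the rows of $\arr{A}$ and those of $\arr{B}$), so Lemma~\ref{OmegaEquiv} yields strong equivalence via a bijection $f$ on row keys that preserves rows. This $f$ will serve as the partial injection required for the definition of strong equivalence of fuzzy arrays; an analogous bijection handles the ternary comparison needed for associativity.

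Given such an $f$, it remains to verify that the fuzzy components agree at matched rows. Since each fuzzy component assigns a multiset to a row (viewed as a function), and $f$ pairs up equal rows, this reduces to showing that the multiset operations $\uplus$, $\cup$, $\cap$ are themselves commutative and associative up to multiset equivalence. For $\uplus$, both properties follow immediately from the corresponding properties of addition applied pointwise to the count function. For $\cup$ and $\cap$, commutativity is immediate from the commutativity of $\vee$ and $\wedge$ on $[0,1]$, since each slot of the resulting multiset is defined as $S[k] \vee T[k]$ or $S[k] \wedge T[k]$.

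The hard part will be associativity of $\cup$ and $\cap$, where the subtlety is that the definitions refer to the $k$-th largest elements $S[k]$ and $T[k]$, which are sensitive to reordering. I would first establish the auxiliary claim that $(S \cup T)[k] = S[k] \vee T[k]$ and $(S \cap T)[k] = S[k] \wedge T[k]$. The key observation is that $S[i] \vee T[i]$ (respectively $S[i] \wedge T[i]$) is non-increasing in $i$, because $S[i]$ and $T[i]$ are non-increasing and $\vee$, $\wedge$ are monotone in each argument; hence the defining list is already sorted in descending order, so reading off the $k$-th largest element is just reading slot $k$. Armed with this claim, $((S \cup T) \cup U)[k] = (S[k] \vee T[k]) \vee U[k]$ and $(S \cup (T \cup U))[k] = S[k] \vee (T[k] \vee U[k])$, which agree by associativity of $\vee$, and the $\cap$ case is identical with $\wedge$. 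Since equality slot-by-slot implies equality of the underlying multisets, this gives the required multiset equivalence. Finally, condition (iii) of strong equivalence of fuzzy arrays is vacuous here because $f$ is a bijection, and by Remark~\ref{EmptyRows} any row outside the common support carries an empty multiset on both sides.
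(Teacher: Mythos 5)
Your proposal is correct and follows essentially the same route as the paper: reduce everything to commutativity/associativity of $\amalg$ on standard arrays (associativity being Lemma~\ref{SqcupAss}) together with the corresponding properties of the multiset operations on the fuzzy components. The only differences are cosmetic improvements — you obtain commutativity of $\amalg$ from Lemma~\ref{OmegaEquiv} rather than the paper's explicit row-key bijection, and you actually justify associativity of multiset $\cup$ and $\cap$ via the observation that $(S \cup T)[k] = S[k] \vee T[k]$ because the defining list is already sorted, a step the paper asserts without proof.
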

\begin{proof}
    Define $f \colon (I_\arr{A} \times \{1\}) \cup (I_\arr{B} \times \{2\}) \rightarrow (I_\arr{B} \times \{1\}) \cup (I_\arr{A} \times \{2\})$ by $f((i, 1)) \coloneqq (i, 2)$ and $f((i', 2)) \coloneqq (i', 1)$ for any $i \in I_\arr{A}$ and $i' \in I_\arr{B}$. Then note 
    \begin{align*}
        & ((\mathbb{I}_{I_\arr{A} \times \{1\}, I_\arr{A}} \arrprod{\oplus}{\otimes} \arr{A}) \oplus (\mathbb{I}_{I_\arr{B} \times \{2\}, I_\arr{B}} \arrprod{\oplus}{\otimes} \arr{B}))(m, n) \\
        & = ((\mathbb{I}_{I_\arr{B} \times \{1\}, I_\arr{B}} \arrprod{\oplus}{\otimes} \arr{B}) \oplus (\mathbb{I}_{I_\arr{A} \times \{2\}, I_\arr{A}} \arrprod{\oplus}{\otimes} \arr{A}))(f(m), n).
    \end{align*} 
    Then this implies $\arr{A} \amalg \arr{B} \equiv \arr{B} \amalg \arr{A}$. From here, it is clear that all three relations are commutative, since the fuzzy components of $\farray{A} \uplus \farray{B}$, $\farray{A} \cup \farray{B}$, and $\farray{A} \cap \farray{B}$ are commutative (by the commutativity of $\cup$, $\wedge$, and $\vee$).
    
    Lemma~\ref{SqcupAss} combined with the associativity of $\cup$, $\wedge$, and $\vee$ implies the associativity of the three given results. 
\end{proof}

\begin{theorem}
    $\cap$ and $\cup$ distribute over $\uplus$ up to strong equivalence of fuzzy arrays.
\end{theorem}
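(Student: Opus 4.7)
The plan is to split each distributive identity into two parts: matching the standard array components up to strong equivalence, then matching the fuzzy components as multisets on each row.

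\textbf{Standard array side.} For either $\cap$ or $\cup$ over $\uplus$, both sides' standard array components are iterated $\amalg$-products of $\arr{A}$, $\arr{B}$, and $\arr{C}$. Lemma~\ref{SqcupAss} together with the commutativity of $\amalg$ established in the preceding theorem shows that each side collapses to $\arr{A} \amalg \arr{B} \amalg \arr{C}$ up to strong equivalence. This furnishes a bijection $f$ of row keys such that corresponding rows agree pointwise, reducing the problem to equivalence of the fuzzy components row by row.

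\textbf{Fuzzy component side.} For a fixed row $r$, set $S \coloneqq \varphi_\farray{A}(r)$, $T \coloneqq \varphi_\farray{B}(r)$, $U \coloneqq \varphi_\farray{C}(r)$, using Remark~\ref{EmptyRows} to treat any row missing from one of the arrays as contributing the empty multiset. The remaining obligation is then the pair of multiset identities $S \cap (T \uplus U) \equiv (S \cap T) \uplus (S \cap U)$ and $S \cup (T \uplus U) \equiv (S \cup T) \uplus (S \cup U)$, applied at every row.

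\textbf{Main obstacle.} These multiset identities form the crux, and I expect them to be the hard part. A clean strategy is to work with the cumulative counts $N_M(x) \coloneqq \sum_{y \geq x} \mathrm{Count}_M(y)$, which determine a multiset up to equivalence. The sorted-pairwise definitions of $\cap$ and $\cup$ yield $N_{S \cap T}(x) = \min(N_S(x), N_T(x))$ and $N_{S \cup T}(x) = \max(N_S(x), N_T(x))$, while $N_{S \uplus T}(x) = N_S(x) + N_T(x)$. Each identity then reduces to a pointwise arithmetic statement relating $\min$, $\max$, and $+$ on nonnegative integers at each threshold $x \in (0, 1]$, to be verified by case analysis on the relative sizes of $N_S(x), N_T(x), N_U(x)$; this is the step where the subtleties of the sorted-pairwise semantics interact with the count-additive semantics of $\uplus$ and must be navigated carefully.
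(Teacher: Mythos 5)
You have read ``$\cap$ and $\cup$ distribute over $\uplus$'' with the usual convention, but the paper means the opposite grouping, and the identities you reduce to are in fact false. The statement being proved is $\farray{A} \uplus (\farray{B} \cap \farray{C}) \equiv (\farray{A} \uplus \farray{B}) \cap (\farray{A} \uplus \farray{C})$ (and likewise with $\cup$), i.e.\ $\uplus$ is the \emph{outer} operation on the left-hand side. The row-wise identity you target, $S \cap (T \uplus U) \equiv (S \cap T) \uplus (S \cap U)$, asks in your own cumulative-count language for $\min(a, b+c) = \min(a,b) + \min(a,c)$ on nonnegative integers, which fails already at $a=b=c=1$. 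Concretely, with $\varphi_\farray{A}(r) = \{1\}$ and $\varphi_\farray{B}(r) = \varphi_\farray{C}(r) = \{1,1\}$ one gets $\varphi_{\farray{A} \cap (\farray{B} \uplus \farray{C})}(r) \equiv \{1\}$ but $\varphi_{(\farray{A} \cap \farray{B}) \uplus (\farray{A} \cap \farray{C})}(r) \equiv \{1,1\}$; this is exactly the counterexample the paper records immediately after the theorem under the label ``$\uplus$ does not distribute over $\cap$ and $\cup$.'' So the case analysis you defer to the ``main obstacle'' step cannot be completed.

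The machinery itself is exactly right and coincides with the paper's: your $N_M(x)$ is the paper's $\varphi_{\farray{Z}}(r)_{\geq c}$, and the relations $N_{S\cap T} = \min(N_S, N_T)$, $N_{S\cup T} = \max(N_S, N_T)$, $N_{S\uplus T} = N_S + N_T$ are the ones it uses. Once the identity is written the right way around, the pointwise statement becomes $a + \min(b,c) = \min(a+b, a+c)$ (resp.\ with $\max$), which holds because integer addition distributes over $\min$ and $\max$ --- no delicate case analysis needed. Your standard-array reduction via Lemma~\ref{SqcupAss} and commutativity also matches the paper's, with the small addendum that one needs $\amalg$ to deduplicate rows so that the repeated copy of $\arr{A}$ in $(\arr{A} \amalg \arr{B}) \amalg (\arr{A} \amalg \arr{C})$ is harmless. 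The only repair required is to swap which operation is outermost and rerun your own argument.
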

\begin{proof}
    We show $\farray{A} \uplus (\farray{B} \cap \farray{C}) \equiv (\farray{A} \uplus \farray{B}) \cap (\farray{A} \uplus \farray{C})$. Let $\farray{X} \coloneqq \farray{A} \uplus (\farray{B} \cap \farray{C})$ and $\farray{Y} \coloneqq (\farray{A} \uplus \farray{B}) \cap (\farray{A} \uplus \farray{C})$. By the commutativity and associativity of $\amalg$ up to strong equivalence, $\arr{A} \amalg (\arr{B} \amalg \arr{C}) \equiv (\arr{A} \amalg \arr{B}) \amalg (\arr{A} \amalg \arr{C})$.
    
    Now it suffices to show that $\varphi_\farray{X}(r) \equiv \varphi_\farray{Y}(r)$ for any row $r$ in $\arr{A} \amalg (\arr{B} \amalg \arr{C})$. For any $c \in [0, 1]$ and fuzzy associative array $\farray{Z}$, define $\varphi_\farray{Z}(r)_{\geq c}$ to be the number of elements in $\varphi_\farray{Z}(r)$ that are at least $c$. Define $\varphi_\farray{Z}(r)_{>c}$ similarly. Then note 
    \begin{align*}
        & \varphi_\farray{X}(r)[k] \geq c \\
        & \iff \varphi_\farray{A}(r)_{\geq c} + \min(\varphi_\farray{B}(r)_{\geq c}, \varphi_\farray{C}(r)_{\geq c}) \geq k \\
        & \iff \min(\varphi_\farray{A}(r)_{\geq c} + \varphi_\farray{B}(r)_{\geq c}, \varphi_\farray{A}(r)_{\geq c} + \varphi_\farray{C}(r)_{\geq c}) \geq k \\
        & \iff \varphi_\farray{Y}(r)[k] \geq c.
    \end{align*}
    
    We can similarly show $\varphi_\farray{X}(r)[k]\leq c \iff \varphi_\farray{Y}(r)[k] \leq c$. This implies $\varphi_\farray{X}(r)[k] = \varphi_\farray{Y}(r)[k]$ for all $k$, so $\varphi_\farray{X}(r) \equiv \varphi_\farray{Y}(r)$ as desired.
    
    By commutativity, $\cap$ is also right-distributive over $\uplus$, and an analogous proof shows $\cup$ distributes over $\uplus$.
\end{proof}.

Note $\uplus$ does not distribute over $\cap$ and $\cup$. For example, consider the arrays $\farray{A}$, $\farray{B}$, and $\farray{C}$ each containing the single row $r$. Then suppose $\varphi_\farray{A}(r) = \{1, 0\}$, $\varphi_\farray{B}(r) = \{1, 1\}$, and $\varphi_\farray{C}(r) = \{1, 1\}$. Then $\farray{A} \cap (\farray{B} \uplus \farray{C}) \not\equiv (\farray{A} \cap \farray{B}) \uplus (\farray{A} \cap \farray{C})$

To show $\uplus$ doesn't distribute over $\cup$ instead take $\varphi_\farray{A}(r)=\{1, 1\}$, $\varphi_\farray{B}(r)=\{1, 0\}$, and $\varphi_\farray{C}(r)=\{1, 0\}$. 

\begin{prop}
    $\farray{A} \setminus \farray{A} \equiv \mathbf{0}$ for all arrays $\farray{A}$.
\end{prop}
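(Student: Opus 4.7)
The plan is to unpack the definition of $\farray{A} \setminus \farray{A}$ and observe that the witness $\delta_{k, \farray{A}, \farray{A}}$ is identically zero. For every row $r$ and every $k \in \mathbb{N}_{>0}$, the defining condition $\varphi_\farray{A}(r)[k] > \varphi_\farray{A}(r)[k]$ fails (strict inequality of a value with itself is impossible), so $\delta_{k, \farray{A}, \farray{A}}(r) = 0$. Hence each element $\varphi_\farray{A}(r)[k] \cdot \delta_{k, \farray{A}, \farray{A}}(r)$ of the multiset $\varphi_{\farray{A} \setminus \farray{A}}(r)$ is $0$.

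Next I would verify that any multiset consisting solely of zeros is equivalent to $\emptyset$. By the definition of multiset equivalence, $S \equiv T$ exactly when $\mathrm{Count}_S(x) = \mathrm{Count}_T(x)$ for all $x \in (0, 1]$; since every entry of $\varphi_{\farray{A} \setminus \farray{A}}(r)$ is $0$, its count on $(0, 1]$ is identically zero, matching $\mathrm{Count}_\emptyset$. Thus $\varphi_{\farray{A} \setminus \farray{A}}(r) \equiv \emptyset$ for every row $r$.

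To conclude $\farray{A} \setminus \farray{A} \equiv \mathbf{0}$, I would take the empty partial injection $f \colon I_{\arr{A} \amalg \arr{A}} \rightharpoonup \emptyset$ (which is the only option, since $\mathbf{0}$ has empty row support). Conditions (i) and (ii) of strong equivalence of fuzzy arrays are then vacuously satisfied, while condition (iii) reduces precisely to the statement that $\varphi_{\farray{A} \setminus \farray{A}}(r) \equiv \emptyset$ for every $r \in I_{\arr{A} \amalg \arr{A}}$, which is what we just established.

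The only real subtlety — hardly an obstacle — is the mismatch in row supports between $\farray{A} \setminus \farray{A}$ (whose underlying standard array $\arr{A} \amalg \arr{A}$ is generally nonempty) and $\mathbf{0}$ (whose row support is $\emptyset$). This is handled entirely by clause (iii) of strong equivalence, which exists precisely to absorb row keys that carry the empty fuzzy multiset.
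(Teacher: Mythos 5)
Your proof is correct and is precisely the argument the paper leaves implicit (the proposition is stated there without proof): $\delta_{k,\farray{A},\farray{A}}$ vanishes identically since the strict inequality $\varphi_\farray{A}(r)[k] > \varphi_\farray{A}(r)[k]$ never holds, an all-zero multiset has zero count on $(0,1]$ and is thus equivalent to $\emptyset$, and the empty partial injection together with clause (iii) of strong equivalence absorbs the nonempty row support of $\arr{A}\amalg\arr{A}$. No gaps.
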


\begin{theorem}\label{DifferenceDist}
    $(\farray{A}\setminus\farray{C})\cup (\farray{B}\setminus\farray{C})\subseteq (\farray{A}\cup \farray{B})\setminus\farray{C}$.
\end{theorem}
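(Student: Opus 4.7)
The plan is to reduce the containment to a row-wise multiset inclusion and then compare the two multisets at every level $c \in (0,1]$ via their counts. By the definition of $\subseteq$ for arrays, it suffices to show, for every row $r$, that $\varphi_{(\farray{A} \setminus \farray{C}) \cup (\farray{B} \setminus \farray{C})}(r) \subseteq \varphi_{(\farray{A} \cup \farray{B}) \setminus \farray{C}}(r)$ as multisets. Fix $r$ and abbreviate $a_i \coloneqq \varphi_\farray{A}(r)[i]$, $b_i \coloneqq \varphi_\farray{B}(r)[i]$, $c_i \coloneqq \varphi_\farray{C}(r)[i]$; each sequence is nonincreasing in $i$ by the definition of $M[k]$, and is identically zero if the row is absent from the corresponding array (Remark~\ref{EmptyRows}). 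Writing $S_{\ge c} \coloneqq \#\{k : S[k] \ge c\}$, an easy check shows $S \subseteq T$ is equivalent to $S_{\ge c} \le T_{\ge c}$ for every $c \in (0,1]$; furthermore, since $(S \cup T)[k] = S[k] \vee T[k]$ and both $S[k]$ and $T[k]$ are nonincreasing in $k$, the indices where the union is $\ge c$ form $\{1, \ldots, \max(S_{\ge c}, T_{\ge c})\}$, giving $[S \cup T]_{\ge c} = \max(S_{\ge c}, T_{\ge c})$.

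Next I would unfold both sides at level $c$. The multiset $N_A \coloneqq \varphi_{\farray{A} \setminus \farray{C}}(r)$ consists, up to zeros, of the $a_i$ for those $i$ with $a_i > c_i$; because $(a_i)$ is already decreasing, $[N_A]_{\ge c} = \#\{i : a_i > c_i \text{ and } a_i \ge c\}$, and analogously for $N_B$. Applying the max-formula, the LHS count at level $c$ is $\max([N_A]_{\ge c}, [N_B]_{\ge c})$. For the RHS, $\varphi_{\farray{A} \cup \farray{B}}(r)[i] = a_i \vee b_i$, so
\[
\bigl[\varphi_{(\farray{A} \cup \farray{B}) \setminus \farray{C}}(r)\bigr]_{\ge c} \;=\; \#\{i : a_i \vee b_i > c_i \text{ and } a_i \vee b_i \ge c\}.
\]

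Finally I would compare. Because $a_i \vee b_i \ge a_i$, every $i$ with $a_i > c_i$ and $a_i \ge c$ also satisfies $a_i \vee b_i > c_i$ and $a_i \vee b_i \ge c$, so $[N_A]_{\ge c} \le [\varphi_{(\farray{A} \cup \farray{B}) \setminus \farray{C}}(r)]_{\ge c}$; the symmetric argument gives the same bound for $[N_B]_{\ge c}$, hence for their max. This is the required count inequality at every $c \in (0,1]$ and completes the proof. I expect the main obstacle to be purely notational: one must carefully note that the difference operator's output is a multiset originally indexed by the input's rank, but that its $j$-th largest element is the $j$-th surviving index (by monotonicity of $(a_i)$), so that reasoning through the counts $S_{\ge c}$ is legitimate.
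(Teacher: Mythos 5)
Your proof is correct and follows the same overall decomposition as the paper's: reduce to the row-wise multiset inclusion, establish $\varphi_{\farray{A}\setminus\farray{C}}(r)\subseteq\varphi_{(\farray{A}\cup\farray{B})\setminus\farray{C}}(r)$ and its $\farray{B}$-analogue from the single observation that $a_i\vee b_i\geq a_i$ (so every index surviving the difference against $\farray{C}$ on the left still survives on the right, with an at-least-as-large value), and combine via the union. Where you differ is the certificate used for multiset inclusion: the paper compares sorted order statistics, building a bijection $f$ with $x\leq f(x)$ from the index-wise inequalities $\varphi_\farray{A}(r)[k]\leq\varphi_{\farray{A}\cup\farray{B}}(r)[k]$ and $\delta_{k,\farray{A},\farray{C}}(r)\leq\delta_{k,\farray{A}\cup\farray{B},\farray{C}}(r)$, whereas you pass to the level counts $S_{\geq c}$ and use the equivalence of $S\subseteq T$ with $S_{\geq c}\leq T_{\geq c}$ for all $c\in(0,1]$. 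The two devices are interchangeable, but yours sidesteps the dominating-bijection step and the attendant worry about whether index-wise domination transfers to sorted order; it is also exactly the level-count technique the paper itself deploys to prove that $\cap$ and $\cup$ distribute over $\uplus$, so it fits the existing toolbox. One small remark: your closing concern about identifying the $j$-th largest element of the difference with the $j$-th surviving index is unnecessary for your argument, since the count of elements $\geq c$ in a multiset depends only on the generating family and not on any ordering of it.
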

\begin{proof} 
    Let $\farray{X}=(\farray{A}\setminus\farray{C})\cup (\farray{B}\setminus\farray{C})$ and $\farray{Y}=(\farray{A}\cup \farray{B})\setminus\farray{C}$. By the associativity and commutativity of $\amalg$, $\amalg$ distributes over itself, so it suffices to check $\varphi_\farray{X}(r)\subseteq \varphi_\farray{Y}(r)$ for any row $r$. 
    
    Consider some row $r$, and let $m$ be the maximum cardinality among $\varphi_\farray{A}(r)$, $\varphi_\farray{B}(r)$, and $\varphi_\farray{C}(r)$. Then 
    \begin{align*}
        \varphi_{\farray{A}\setminus\farray{C}}(r) & \equiv \{\varphi_\farray{A}(r)[k]\cdot \delta_{k, \farray{A}, \farray{C}}(r)\mid k\leq m\}, \\
        \varphi_{(\farray{A}\cup\farray{B}) \setminus \farray{C}}(r) & \equiv \{\varphi_{\farray{A} \cup \farray{B}}(r)[k] \cdot \delta_{k, \farray{A} \cup \farray{B}, \farray{C}}(r) \mid k \leq m\}.
    \end{align*}
    
    Now note for any fixed $k$, $\varphi_\farray{A}(r)[k]\leq \varphi_{\farray{A}\cup\farray{B}}(r)[k]$ and $\delta_{k, \farray{A}, \farray{C}}(r)\leq \delta_{k, \farray{A}\cup\farray{B}, \farray{C}}(r)$. This implies that $\varphi_{\farray{A}\setminus\farray{C}}(r)\subseteq \varphi_{(\farray{A}\cup\farray{B})\setminus\farray{C}}(r)$, because the above inequalities imply that among the $m$ elements (after zero-padding appropriately) of these two multisets, we can construct a bijection $f\colon\varphi_{\farray{A}\setminus\farray{C}}(r)\rightarrow  \varphi_{(\farray{A}\cup\farray{B})\setminus\farray{C}}(r)$ such that $x\leq f(x)$ for all $x\in \varphi_{\farray{A}\setminus\farray{C}}(r)$. In particular, this implies that the $\varphi_{\farray{A}\setminus\farray{C}}(r)[k]\leq \varphi_{(\farray{A}\cup\farray{B})\setminus\farray{C}}(r)[k]$, which is equivalent to $\varphi_{\farray{A}\setminus\farray{C}}(r)\subseteq \varphi_{(\farray{A}\cup\farray{B})\setminus\farray{C}}(r)$.
    
    Analogous reasoning shows $\varphi_{\farray{B}\setminus\farray{C}}(r)\subseteq \varphi_{(\farray{A}\cup\farray{B})\setminus\farray{C}}(r)$, which is enough to imply $\varphi_\farray{X}(r)\subseteq \varphi_\farray{Y}(r)$.
\end{proof}

\begin{remark}
    With traditional sets, Theorem \ref{DifferenceDist} can be strengthened to replace $\subseteq$ with equality. However, due to the more general nature of fuzzy multisets, equality need not hold. 

    One can similarly show $(\farray{A}\setminus\farray{B})\cup (\farray{A}\setminus\farray{C})\subseteq  \farray{A}\setminus(\farray{B}\cap\farray{C})$.
\end{remark}

\begin{theorem}\label{ProjWeakEquiv}
    $\Pi_J$ preserves $\cup$ up to weak equivalence.
\end{theorem}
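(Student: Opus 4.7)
The plan is to exhibit a partial injection witnessing weak equivalence between $\Pi_J(\farray{A} \cup \farray{B})$ and $\Pi_J(\farray{A}) \cup \Pi_J(\farray{B})$, by matching row keys whose rows agree, and then checking that the maxima of the corresponding fuzzy components match. Weak equivalence only demands equality of $\bigvee \varphi$ per matched row, which is exactly what lets the proof go through.

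First I would unpack the standard-array components. The standard array underlying $\Pi_J(\farray{A} \cup \farray{B})$ is $\omega((\arr{A} \amalg \arr{B}) \arrprod{\oplus}{\otimes} \mathbb{I}_J)$, while the standard array underlying $\Pi_J(\farray{A}) \cup \Pi_J(\farray{B})$ is $\omega(\arr{A} \arrprod{\oplus}{\otimes} \mathbb{I}_J) \amalg \omega(\arr{B} \arrprod{\oplus}{\otimes} \mathbb{I}_J)$. Using Proposition~\ref{OmegaRows} and the definition of $\amalg$, both arrays have row set $\{s|_J : s \text{ is a row of } \arr{A} \text{ or } \arr{B}\}$ with duplicates collapsed by $\omega$, so Lemma~\ref{OmegaEquiv} yields a partial injection $f$ between the row keys pairing equal rows; condition (iii) of fuzzy weak equivalence is discharged by Remark~\ref{EmptyRows}.

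Next I would verify condition (ii): for a matched row $r$, compute $\bigvee$ of the fuzzy components on each side. On the LHS,
\begin{equation*}
\varphi_{\Pi_J(\farray{A} \cup \farray{B})}(r) = \biguplus_{s, s|_J = r}\bigl(\varphi_\farray{A}(s) \cup \varphi_\farray{B}(s)\bigr),
\end{equation*}
whose maximum is $\bigvee_{s, s|_J = r}\bigl(\bigvee \varphi_\farray{A}(s) \vee \bigvee \varphi_\farray{B}(s)\bigr)$, using that $\bigvee(M \cup M') = (\bigvee M) \vee (\bigvee M')$ and that $\biguplus$ preserves maxima. On the RHS,
\begin{equation*}
\varphi_{\Pi_J(\farray{A}) \cup \Pi_J(\farray{B})}(r) = \Bigl(\biguplus_{s, s|_J = r} \varphi_\farray{A}(s)\Bigr) \cup \Bigl(\biguplus_{s, s|_J = r} \varphi_\farray{B}(s)\Bigr),
\end{equation*}
whose maximum is $\bigl(\bigvee_{s, s|_J = r} \bigvee \varphi_\farray{A}(s)\bigr) \vee \bigl(\bigvee_{s, s|_J = r} \bigvee \varphi_\farray{B}(s)\bigr)$. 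Commutativity and associativity of $\vee$ show the two expressions coincide.

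The main obstacle is recognizing why the statement cannot be upgraded to strong equivalence: the two fuzzy components displayed above genuinely differ as multisets — the former sums multiplicities before taking coordinate-wise maxima, the latter after — so one truly must pass to $\bigvee$ to obtain the identity. Beyond that, the argument is chiefly book-keeping: handling rows appearing in only one of $\farray{A}$ or $\farray{B}$ via Remark~\ref{EmptyRows}, and aligning the row-key relabellings introduced by $\amalg$ so that the partial injection $f$ is well-defined.
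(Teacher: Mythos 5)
Your proposal is correct and follows essentially the same route as the paper: first identify the rows of $\omega((\arr{A}\amalg\arr{B})\arrprod{\oplus}{\otimes}\mathbb{I}_J)$ and of $\omega(\arr{A}\arrprod{\oplus}{\otimes}\mathbb{I}_J)\amalg\omega(\arr{B}\arrprod{\oplus}{\otimes}\mathbb{I}_J)$ as the $J$-restrictions of the rows of $\arr{A}$ and $\arr{B}$ via Proposition~\ref{OmegaRows} and Lemma~\ref{OmegaEquiv}, then check that $\bigvee$ of the two fuzzy components agree using $\bigvee(M\cup M')=(\bigvee M)\vee(\bigvee M')$ and that $\biguplus$ preserves maxima. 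Your extra remarks on condition (iii) and on why strong equivalence fails are consistent with the paper's surrounding discussion.
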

\begin{proof} 
    Note that the rows of $\omega(\arr{A} \arrprod{\oplus}{\otimes} \mathbb{I}_J)$ are the rows of $\arr{A}$ restricted to $J$ (not including multiplicity) and similarly the rows of $\omega(\arr{B} \arrprod{\oplus}{\otimes} \mathbb{I}_J)$ are the rows of $\arr{B}$ restricted to $J$. Now by Lemma \ref{OmegaRows}, the rows of $\omega(\arr{A} \arrprod{\oplus}{\otimes} \mathbb{I}_J) \amalg \omega(\arr{B} \arrprod{\oplus}{\otimes} \mathbb{I}_J)$ are the rows of both $\arr{A}$ and $\arr{B}$ restricted to $J$.
    
    But note these are also the rows of $(\arr{A} \amalg \arr{B}) \arrprod{\oplus}{\otimes} \mathbb{I}_J$, so in fact $(\arr{A} \amalg \arr{B}) \arrprod{\oplus}{\otimes} \mathbb{I}_J \approx \omega(\arr{A} \arrprod{\oplus}{\otimes} \mathbb{I}_J) \amalg \omega(\arr{B} \arrprod{\oplus}{\otimes} \mathbb{I}_J)$. But $\omega((\arr{A} \amalg \arr{B}) \arrprod{\oplus}{\otimes} \mathbb{I}_J) \approx (\arr{A} \amalg \arr{B}) \arrprod{\oplus}{\otimes} \mathbb{I}_J$ by Lemma~\ref{OmegaRows}, so $\omega((\arr{A} \amalg \arr{B}) \arrprod{\oplus}{\otimes} \mathbb{I}_J) \approx \omega(\arr{A} \arrprod{\oplus}{\otimes} \mathbb{I}_J) \amalg \omega(\arr{B} \arrprod{\oplus}{\otimes} \mathbb{I}_J)$. 
    
    The result follows by 
    \begin{align*}
        \bigvee_{x \in \varphi_{\Pi_J(\arr{A} \cup \arr{B})}(r)} \hspace{-1.3mm} {x} & = \bigvee_{y|_J = r} \Bigg(\bigvee_{x \in \varphi_\farray{A}(y) \cup \varphi_\farray{B}(y)} {x}\Bigg) \\
        & = \bigvee_{x \in \varphi_{\Pi_J(\arr{A}) \cup \Pi_J(\arr{B})}(r)}{x}. \qedhere
    \end{align*}
\end{proof}

\begin{remark}
    Note that $\Pi_J$ does not preserve $\cup$ up to strong equivalence. For example, consider an array $\farray{A}$ with two rows $r$ and $t$. Suppose $r|_J=t|_J=s$ and take $\varphi_\farray{A}(r)=\{0.5, 0.5\}$, $\varphi_\farray{A}(t)=\{1\}$. Then take $\farray{B}$ with the same two rows, and $\varphi_\farray{B}(r)=\{1\}$, $\varphi_\farray{B}(t)=\{0.5, 0.5\}$. Then $\varphi_{\Pi_J(\farray{A}\cup\farray{B})}(s)\equiv \{1, 1, 0.5, 0.5\}$, but $\varphi_{\Pi_J(\farray{A})\cup\Pi_J(\farray{B})}(s)\equiv \{1, 0.5, 0.5\}$ so $\Pi_J(\farray{A}\cup\farray{B})\not\equiv \Pi_J(\farray{A})\cup\Pi_J(\farray{B})$.  

    $\Pi_J$ also does not preserve $\cap$ or $\setminus$ even up to weak equivalence. Using the same arrays $\farray{A}$ and $\farray{B}$, $\varphi_{\Pi_J(\farray{A}\cap\farray{B})}(s)\equiv\{0.5, 0.5\}$, $\varphi_{\Pi_J(\farray{A})\cap\Pi_J(\farray{B})}(s)\equiv\{1, 0.5, 0.5\}$, $\varphi_{\Pi_J(\farray{A}\setminus\farray{B})}(s)\equiv \{1, 0.5\}$, and $\varphi_{\Pi_J(\farray{A})\setminus\Pi_J(\farray{B})}(s)\equiv \emptyset$. However, $\Pi_J$ does preserve $\uplus$ up to strong equivalence.
\end{remark}

\begin{defn}[theta-join]
    Given a standard array $\arr{A}$, let $I_\arr{A}^{0}$ denote the set of row keys of $\arr{A}$ corresponding to nonzero rows.

    Then if $J_1$ and $J_2$ are sets of column keys, the \emph{theta-join} of two arrays $\farray{A}$ and $\farray{B}$ via a function $\theta \colon \mathbb{V}^{J_1} \times \mathbb{V}^{J_2} \rightarrow [0, 1]$ is 
    \begin{align*}
        \farray{A} \bowtie_\theta \farray{B} & \coloneqq (\arr{A} \bowtie_\theta \arr{B}, \varphi_{\farray{A} \bowtie_\theta \farray{B}}), \\
        \varphi_{\farray{A} \bowtie_\theta \farray{B}}(r) & \coloneqq \{ a \wedge b \wedge \theta(r_A, r_B) \mid a \in \varphi_\farray{A}(g), b \in \varphi_\farray{B}(h)\}, \\
        \arr{A} \bowtie_\theta \arr{B} & \coloneqq (\arr{A} \tensor \mathbb{I}_{I_\arr{B}^0, \{1\}}) \oplus \rho_{\{2\} \times J_\arr{B}, J_\arr{B} \times \{2\}}(\mathbb{I}_{I_\arr{A}^0, \{2\}} \tensor \arr{B}).
    \end{align*}

    Here $r_A(k_2) = r((k_2, 1))$ for $k_2\in J_1$, $r_B(k_2) = r((k_2, 2))$ for $k_2\in J_2$, $g(k) \coloneqq r|_{J_\arr{A} \times\{1\}}((k, 1))$, and $h(k) \coloneqq r|_{J_\arr{B} \times \{2\}}((k, 1))$.
\end{defn}

\begin{example}
    Let $\farray{A}$ and $\farray{B}$ be the following (note row keys are omitted): 
    
    \begin{center}
    \begin{tabular}{|l|l|l|}
    \hline
        \multicolumn{3}{|c|}{$\farray{A}$} \\
        \hline
        Name & Age & $\varphi_\farray{A}$\\
        \hline
        John & 30 & \{1, 0.8\}\\
        Sam & 28 & \{0.9\} \\
        \hline   
    \end{tabular} \hspace{1cm}
    \begin{tabular}{|l|l|l|}
    \hline
        \multicolumn{3}{|c|}{$\farray{B}$} \\
        \hline
        Name & Age & $\varphi_\farray{B}$ \\
        \hline
        Alex & 30 & \{0.6\} \\
        John & 29 & \{0.8\} \\
        \hline   
    \end{tabular}
    \end{center}
    
    Now suppose $\theta$ takes $\arr{A}.Age$ and $\arr{B}.Age$ as its inputs, and yields $1$ if the two ages are equal, $0.5$ if the two ages differ by $1$, and $0$ otherwise. Then $\farray{A}\bowtie_\theta\farray{B}$ is (up to equivalence)
        
    \begin{center}
    \begin{tabular}{|l|l|l|l|l|}
    \hline
        \multicolumn{5}{|c|}{$\farray{A}\bowtie_\theta \farray{B}$} \\
        \hline
        (Name, 1) & (Age, 1) & (Name, 2) & (Age, 2) & $\varphi$ \\
        \hline
        John & 30 & Alex & 30 & \{0.6, 0.6\} \\
        John & 30 & John & 29 & \{0.5, 0.5\} \\
        Sam & 28 & John & 29 & \{0.5\} \\
        Sam & 28 & Alex & 30 & \{\} \\
        \hline   
    \end{tabular}
    \end{center}
    
\end{example}

\begin{prop}
    All relations described in \S \ref{Relations} and \S \ref{Relations2} are preserved up to equivalence upon zero-padding. This allows us to implicitly zero-pad otherwise incompatible arrays before carrying out operations.
\end{prop}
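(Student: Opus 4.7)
The plan is to verify, operation by operation, that zero-padding commutes with each of the relations from \S\ref{Relations} and \S\ref{Relations2} up to the appropriate equivalence, so that implicit padding in a binary operation never changes the equivalence class of the output. The key facts I will exploit are (i) the preceding remark, which states $\farray{A} \equiv \mathrm{pad}_{I, J}(\farray{A})$; (ii) the observation that the rows added by padding are zero rows, since $\arr{A}_{I, J}[i, \colo] \equiv 0$ for $i \in I \setminus I_\arr{A}$ by the definition of $\arr{A}_{I, J}$ in Definition~\ref{ZeroPadding}, so $\varphi = \emptyset$ on those rows by Remark~\ref{EmptyRows}; and (iii) the fact that extending $J_\arr{A}$ by additional column keys carrying zero values does not alter the row content of $\arr{A}$ after restriction to $J_\arr{A}$.

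For each unary operation, the argument is short. Selection $\sigma_\psi$ leaves the standard array untouched, and on a padded zero row $r$ the multiset $\{x \wedge \psi(r) \mid x \in \emptyset\} = \emptyset$, so no new content is created. Renaming $\rho_f$ extends by extending $f$ by the identity on the padded column keys, and its definition composes $r$ with $f$ and restricts to $J_\arr{A}$, which discards exactly the padded columns. For projection $\Pi_J$, the standard-array part already applies $\omega$, which by Proposition~\ref{OmegaRows} absorbs padded zero rows; the $\biguplus_{s, s|_J = r} \varphi_\farray{A}(s)$ in the fuzzy component takes $\emptyset$ on each padded row $s$ and so contributes nothing beyond the original sum. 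For each of $\uplus$, $\cup$, $\cap$, and $\setminus$, the standard-array part is $\arr{A} \amalg \arr{B}$, itself defined through $\omega$ and hence insensitive to zero rows, and the fuzzy component is a pointwise operation which on padded rows evaluates to $\emptyset$ in all four cases since $\emptyset \uplus \emptyset = \emptyset \cup \emptyset = \emptyset \cap \emptyset = \emptyset$ and the difference of empty multisets is empty.

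The main obstacle is the theta-join, whose definition mixes Kronecker products, a renaming, and a fuzzy component that depends on the restricted rows $r_A$, $r_B$, $g$, and $h$. The plan is to verify that padding does not change $I_\arr{A}^0$ or $I_\arr{B}^0$ (padded rows are zero rows by construction and so are excluded from the nonzero row sets), that the renaming $\rho_{\{2\} \times J_\arr{B}, J_\arr{B} \times \{2\}}$ commutes with the extension of $J_\arr{A}$ and $J_\arr{B}$ since padded columns carry only zeros, and that $g$ and $h$ produce the same functions before and after padding once restricted to $J_\arr{A} \times \{1\}$ and $J_\arr{B} \times \{2\}$ respectively. Once this bookkeeping is discharged, each operation commutes with padding up to strong equivalence, which in turn justifies the implicit-padding convention used throughout and is compatible with the weak-equivalence statements such as Theorem~\ref{ProjWeakEquiv} that were proved assuming common column keys.
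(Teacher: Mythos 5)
The paper states this proposition without any proof, so there is nothing to compare against directly; your operation-by-operation verification is exactly the argument the authors leave implicit. Your three organizing facts are the right ones: padded row keys index all-zero rows whose fuzzy component is $\emptyset$ by Remark~\ref{EmptyRows}, padded columns carry only zero values, and the preceding remark already gives $\farray{A}\equiv\mathrm{pad}_{I,J}(\farray{A})$. The case analysis for $\sigma_\psi$, $\Pi_J$, and the four pointwise binary operations is sound (in particular, for $\setminus$ the resulting multiset on a padded row consists only of zeros, which is equivalent to $\emptyset$ since equivalence of multisets only counts values in $(0,1]$). Your identification of the theta-join as the one case requiring real bookkeeping is also apt, and the key observation there --- that padding cannot enlarge $I_\arr{A}^0$ or $I_\arr{B}^0$ because padded rows are zero rows --- is correct. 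One small point worth tightening: for renaming and projection, after padding the restriction in the fuzzy component is taken over the enlarged column set $J_\arr{A}\cup J$ rather than $J_\arr{A}$, so you should note explicitly that composing with the padded fuzzy component $\lambda r\colon\varphi_\farray{A}(r|_{J_\arr{A}})$ collapses everything back to the original columns; you gesture at this but it is the one place a reader could object. With that caveat, the proposal is correct and supplies a proof the paper omits.
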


\section{Conclusion}\label{Conclusion}

Fuzzy databases offer a more flexible management system than traditional RDBMSs, allowing for the imprecise or vague language present in our everyday language, and providing applications in areas such as artificial intelligence or control systems. By providing a mathematical framework unifying fuzzy databases in the language of semiring linear algebra and multiset operations, we create a natural and rigorous way to describe these databases.

Further works could include implementing a fuzzy database using the methods described in this paper and analyzing the performance compared to other methods, such as the work in \cite{9166621}. Additionally, one could seek to generalize the work further by replacing fuzzy membership values from the interval $[0, 1]$, to arbitrary residuated lattices. 

\section*{Acknowledgements}

The authors wish to acknowledge the following individuals for their contributions and support: W. Arcand, W. Bergeron, D. Bestor, C. Birardi, B. Bond, S. Buckley, C. Byun, G. Floyd, V. Gadepally, D. Gupta, M. Houle, M. Hubbell, M. Jones, A. Klien, C. Leiserson, K. Malvey, P. Michaleas, C. Milner, S. Mohindra, L. Milechin, J. Mullen, R. Patel, S. Pentland, C. Prothmann, A. Prout, A. Reuther, A. Rosa , J. Rountree, D. Rus, M. Sherman, C. Yee.

\bibliographystyle{IEEEtran}
\bibliography{bibliography}

\end{document}